\newtheorem{thm}{\bf Theorem}[section]
\newtheorem{rem}[thm]{\bf Remark}
\newtheorem{ass}[thm]{\bf Assumption}
\newtheorem{prop}[thm]{\bf Proposition}
\newtheorem{deff}[thm]{\bf Definition}
\newcommand{\tf}{\tilde{f}}
\newcommand{\hf}{\hat{f}}
\newcommand{\X}{\mathcal{X}}
\newcommand{\Y}{\mathcal{Y}}
\renewcommand{\ll}{\mathcal{L}}
\renewcommand{\lll}{\mathcal{L}_{\lambda}}
\newcommand{\D}{\mathcal{D}}
\newcommand{\rr}{\mathcal{R}}
\newcommand{\K}{\mathcal{K}}
\renewcommand{\O}{\mathcal{O}}
\renewcommand{\Sigma}{C}
\newcommand{\R}{\mathbb{R}}
\newcommand{\N}{\mathbb{N}}
\newcommand{\Lb}{L}
\newcommand{\Xb}{X}
\newcommand{\xb}{x}
\newcommand{\Zk}{\mathcal{Z}}
\newcommand{\zk}{\mathfrak{z}}
\newcommand{\eps}{\varepsilon}
\newcommand{\ra}{\rightarrow}
\newcommand{\id}{\operatorname{Id}}
\newcommand{\norm}[1]{\left\Vert #1 \right\Vert}
\newcommand{\taus}{{\tau_s}}
\newcommand{\trans}{{\mathsf{T}}}
\title{\LARGE \bf
Learning Koopman-based Stability Certificates for Unknown Nonlinear Systems
}
\author{Ruikun Zhou, Yiming Meng, Zhexuan Zeng, and Jun Liu, \IEEEmembership{Senior Member, IEEE}
\thanks{This research was supported in part by an NSERC Discover Grant and the Canada Research Chairs program.}%
\thanks{Ruikun Zhou and Jun Liu are with the Department of Applied Mathematics, Faculty of Mathematics, University of Waterloo, Waterloo, Ontario N2L 3G1, Canada.  Emails: \texttt{ruikun.zhou@uwaterloo.ca, j.liu@uwaterloo.ca}}%
\thanks{Yiming Meng is with  the
Coordinated Science Laboratory, University of Illinois Urbana-Champaign,
Urbana, IL 61801, USA. Email: 
        \texttt{ymmeng@illinois.edu}}%
\thanks{Zhexuan Zeng is with Department of Automatic Control, Huazhong University of Science and Technology, Wuhan, China. Email: 
        \texttt{xuanxuan@hust.edu.cn}}%
}
\begin{document}

\maketitle
\thispagestyle{empty}
\pagestyle{empty}

\begin{abstract}
 Koopman operator theory has gained significant attention in recent years for identifying discrete-time nonlinear systems by embedding them into an infinite-dimensional linear vector space. However, providing stability guarantees while learning the continuous-time dynamics, especially under conditions of relatively low observation frequency, remains a challenge within the existing Koopman-based learning frameworks. To address this challenge, we propose an algorithmic framework to simultaneously learn the vector field and Lyapunov functions for unknown nonlinear systems, using a limited amount of data sampled across the state space and along the trajectories at a relatively low sampling frequency. The proposed framework builds upon recently developed high-accuracy Koopman generator learning for capturing transient system transitions and physics-informed neural networks for training Lyapunov functions. We show that the learned Lyapunov functions can be formally verified using a satisfiability modulo theories (SMT) solver and provide less conservative estimates of the region of attraction compared to existing methods.
\end{abstract}

\begin{keywords}
Koopman operator, stability analysis, Lyapunov functions, operator learning
\end{keywords}

\section{Introduction}
Stability properties can be qualitatively characterized by Lyapunov functions for various nonlinear systems. However, discovering a Lyapunov function remains a long-standing challenge in nonlinear dynamical systems, even for known systems. For unknown systems, this challenge is further compounded by the accuracy of data-driven system identification and the formal verification of the learned Lyapunov functions. Below, we review the recent advances and challenges associated with these two factors before proposing the learning structure for stability certificates.

The development of Koopman operator theory~\cite{koopman1931hamiltonian} for dynamical systems provides a promising alternative learning approach for nonlinear system identification and stability analysis~\cite{mauroy2020koopman}, using data from snapshots of the flow map (trajectories). By leveraging the spectral properties~\cite{mezic_spectral_2005} and mode decomposition capabilities~\cite{schmid2010dynamic,williams2015data} of Koopman operators, nonlinear dynamics can be converted into a discrete family of observable functions (functions of the states), which evolve linearly in the infinite-dimensional function space driven by the Koopman operator. This nonlinear-to-linear conversion, seemingly friendly at first, is later found to require the study of continuous functional calculus. %
It suffers from restrictive assumptions on the function properties~\cite{lusch_deep_2018} to ensure correct usage. Consequently, many applications are limited to discrete-time nonlinear systems.

In identifying the continuous-time dynamics of an unknown system, \cite{mauroy2019koopman} proposed a method using the system's (infinitesimal) generator, also known as the Koopman generator. This approach avoids the need for time derivatives, requiring fewer data and outperforming widely used methods at low sampling frequencies, such as the sparse identification of nonlinear dynamics (SINDy) technique~\cite{brunton2016discovering, klus2020data}, which heavily relies on low-resolution finite-difference approximations of time derivatives from discrete-time sampled snapshots. However, the method in \cite{mauroy2019koopman} requires the desirable diagonal property of the Koopman operator, which is typically difficult to satisfy for most unknown nonlinear systems, thus limiting its correct usage. This limitation has been addressed by recent research in \cite{meng2024koopman}, which   improves learning reliability and accuracy.

Koopman operators also offer a powerful tool for constructing Lyapunov functions and stability analysis~\cite{mauroy_global_2016,yi2023equivalence}. %
In~\cite{mauroy_global_2016}, the authors demonstrated that a set of Lyapunov functions for nonlinear systems with global stability can be constructed from the eigenfunctions of the Koopman operator. Building on this result, and incorporating the autoencoder structure~\cite{azencot2020forecasting} in Koopman operator learning, the authors of~\cite{deka2022koopman} introduced an algorithm to identify  Koopman eigenfunctions that parameterize a set of Lyapunov function candidates. 
More recently, \cite{umathe2023spectral} used the spectrum of the Koopman operator to directly identify the stability boundary of nonlinear systems. 
However, due to the low-resolution quantification of the Lyapunov candidate function space, as quantified by the Lyapunov derivative inequality, all the aforementioned stability verification methods face the limitation of conservative estimation of the region of attraction (ROA). In contrast, \cite{meng2023learning} proposed a modified Zubov-Koopman operator approach to approximate a maximal Lyapunov function, which is related to the solution of stability-related PDEs (the Lyapunov or Zubov equation \cite{liu2023physics}), where the non-trivial domain corresponds to the ROA. Regardless, all existing Koopman-based constructions of Lyapunov functions cannot be easily integrated into the system identification structure, and therefore lack the ability to be formally verified using the predicted system vector field, leading to a loss of the true predictability of the certifiable region of attraction.

In this paper, we propose an integrated algorithmic framework to simultaneously learn the vector field and Lyapunov functions with formal guarantees for unknown nonlinear systems, thereby achieving full predictability of stability. Our methodology leverages the strengths of the Koopman generator learning framework and utilizes the learned generator to approximately solve stability-related PDEs, with the aim of enhancing the formally verified ROA. In light of the recent development of physics-informed neural network solutions for PDEs, the proposed method in this paper is not merely a straightforward combination of Koopman generator learning and PDE solving using a parameterized ansatz to match the data and the equation. The input of human knowledge on the physics-informed conditions and the corresponding design of the loss function to train the solution are essential components.

It is worth noting that the technique proposed in~\cite{meng2023learning} can learn a near-maximal Lyapunov candidate for unknown systems, but requires a computationally intensive deep neural network smoothing process. Moreover, the Lyapunov derivative of the candidate cannot be verified due to the lack of information about the system's vector field. On the other hand, a non-Koopman neural system learning and Lyapunov construction framework~\cite{zhou2022neural} can provide formal guarantees to ensure closed-loop stability. However, this approach does not offer insights into the quality of the resulting region of attraction (ROA). We address these shortcomings and, to the best of the authors' knowledge, this is the first work to develop a systematic approach for finding formally verified Lyapunov functions using the Koopman generator. 

The detailed contributions are as follows.
\begin{itemize}
    \item We propose a streamlined framework to simultaneously learn the vector field and Lyapunov functions by reusing the same observable test functions and a subset of the data samples.
     \item  Beyond formulating the least-squares problem, we provide additional conditions for solving the stability-related PDEs using the Koopman generator.
     \item We demonstrate that the learned Lyapunov function is valid for unknown systems, with formal guarantees achieved through verification using SMT solvers.
 \end{itemize}

\section{Preliminaries}

Throughout this paper, we consider a continuous-time nonlinear dynamical system of the form 

\begin{equation}\label{eq: sys}
   \dot{x} = f(x), \quad x(0) = x_0, 
\end{equation}
and the vector field  $f:\X\ra\R^n$ is assumed to be locally Lipschitz continuous, where $\X \subseteq \R^n$ is a pre-compact state space.  For each initial condition $x_0$, we denote the unique forward flow map, i.e., the solution map, by $\phi: I\times \X\rightarrow \X$, where $I$ is the maximal interval of existence. In this paper, we assume $I=[0,\infty)$ to the initial value problem  \eqref{eq: sys}. The flow map should then satisfy 1) $\partial_t(\phi(t,x)) = f(\phi(t,x))$, 2) $\phi(0, x)=x_0$, and 3) $\phi(s, \phi(t,x))=\phi(t+s, x)$ for all $t,s\in I$.

\subsection{Koopman Operators and the Infinitesimal Generator}
\begin{deff}%
    \label{eq_koopman} %
The Koopman operator family $\{\K_t\}_{t\geq 0}$ of system \eqref{eq: sys} is a collection of maps $\mathcal{K}_t: \mathcal{C}^1(\X) \rightarrow \mathcal{C}^1(\X)$  defined by
\begin{align} \label{eq: Kt}
\mathcal{K}_t h = h \circ \phi(t, \cdot), \quad h \in \mathcal{C}^1(\X)
\end{align}
for each $t\geq 0$, where $\circ$ is the composition operator. The (infinitesimal) generator $\ll$ of $\{\K_t\}_{t\geq 0}$, which is also called the Koopman generator, is defined as $\ll h:= \lim_{t\ra 0}\frac{\K_t h-h}{t}$ and equivalent to
\begin{equation} \label{eq:equality}
    \ll h(x) = \nabla h(x) \cdot f(x), \;\;h\in\mathcal{C}^1(\X). 
\end{equation}
\end{deff}

It is known that $\{\K_t\}_{t\geq 0}$ forms a linear $\mathcal{C}_0$-semigroup~\cite{meng2024resolvent}, and thus there exist constants $\omega\geq 0$ (indicating the exponential growth rate) and $\Sigma\geq 1$ (representing a spatial uniform scaling) such that $\norm{\K_t}\leq C e^{\omega t}, \; \forall t\geq 0$~\cite{pazy2012semigroups}.

\subsection{Concept of Stability}
Without loss of generality, we assume the origin is an equilibrium point of the system~\eqref{eq: sys}. 
\begin{deff}[Asymptotic Stability]
The origin is
said to be asymptotically stable for \eqref{eq: sys}
if 1) for every $\eps>0$, there exists  a $\delta>0$ such that $|x| < \delta$ implies $|\phi(t, x)| <\eps$ for all $t\geq 0$, and 2) there exists a neighborhood $U\subseteq \X$ of the origin such that, for each $x_0 \in U$, we have $\phi(t, x) \in U$ for all $t\geq 0$ and $\lim_{t\ra\infty}|\phi(t, x)|=0$.
\end{deff}
We further define the region of attraction of the equilibrium point given its asymptotic stability. 
\begin{deff}[ROA]
Suppose that the origin is asymptotically stable. The domain of attraction of it is a  set  defined as
$ \D :=\{x\in\X: \lim_{t\ra\infty} |\phi(t, x)|=0\}$.
Any forward invariant subset $\rr \subseteq \D$ is called a region of attraction (ROA).
\end{deff}

\begin{thm}[Lyapunov Theorem]
\label{thm: lyapunov}
Let $D\subseteq \R^n$ be an open set containing the origin. Consider the nonlinear system~\eqref{eq: sys}. Let $V(x): D \ra \R$ be a continuously differentiable function such that 
\begin{align}
\begin{split}
        V(\textbf{0})=0 \text { and } V(x)>0 & \text { for all } x \in D \setminus\{\textbf{0}\}, 
        \\ 
        \mathcal{L}V(x):=\nabla V(x)\cdot f(x)<0 & \text { for all } x \in D \setminus\{\textbf{0}\},  
\end{split}
\end{align}
Then, the origin is asymptotically stable for the system.
\end{thm}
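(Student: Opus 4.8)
The plan is to prove the classical Lyapunov stability theorem by exhibiting, for an arbitrary tolerance, an invariant neighborhood of the origin on which trajectories remain and toward which the origin attracts, using sublevel sets of $V$ as the controlling device. First I would fix $\eps>0$ small enough that the closed ball $\overline{\oball}_\eps$ lies in $D$, and define $\alpha := \min_{\abs{x}=\eps} V(x)$; since $V$ is continuous and positive on $D\setminus\{0\}$, we have $\alpha>0$. I would then set $\Omega_\beta := \{x\in\oball_\eps : V(x)<\beta\}$ for some $0<\beta<\alpha$ and argue that $\Omega_\beta$ is nonempty (it contains a neighborhood of the origin, by continuity of $V$ and $V(0)=0$) and that it is contained in the interior $\inte(\oball_\eps)$, because any point on the boundary $\abs{x}=\eps$ has $V\geq \alpha>\beta$.

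The central step is to show that $\Omega_\beta$ is forward invariant. Along any solution $\phi(t,x)$ with $x\in\Omega_\beta$, the chain rule gives $\frac{d}{dt}V(\phi(t,x)) = \nabla V(\phi(t,x))\cdot f(\phi(t,x)) = \ll V(\phi(t,x))$, which is strictly negative whenever $\phi(t,x)\neq 0$ by the hypothesis $\ll V(x)<0$ on $D\setminus\{0\}$. Hence $t\mapsto V(\phi(t,x))$ is nonincreasing, so $V(\phi(t,x))\leq V(x)<\beta$ for all $t\geq 0$ in the interval of existence, which keeps the trajectory inside $\Omega_\beta\subseteq\inte(\oball_\eps)$. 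This simultaneously establishes stability, since choosing $\delta>0$ so that $\oball_\delta\subseteq\Omega_\beta$ yields $\abs{x}<\delta\Rightarrow \abs{\phi(t,x)}<\eps$ for all $t\geq 0$, and it confirms that the assumed global existence $I=[0,\infty)$ is consistent because the trajectory never leaves the compact set $\overline{\Omega_\beta}$.

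For the attractivity part, I would argue that $V(\phi(t,x))$, being nonincreasing and bounded below by $0$, converges to some limit $c\geq 0$, and show $c=0$. The standard route is contradiction: if $c>0$, then the trajectory stays in the compact annular set $\{x\in\overline{\oball}_\eps : c\leq V(x)\leq V(x_0)\}$, which excludes the origin, so $\ll V$ attains a strictly negative maximum $-\gamma<0$ there; integrating $\frac{d}{dt}V(\phi(t,x))\leq -\gamma$ forces $V(\phi(t,x))\to-\infty$, contradicting positivity. Therefore $c=0$, and using continuity of $V$ together with $V(x)>0$ away from the origin, one concludes $\phi(t,x)\to 0$, i.e., $\lim_{t\to\infty}\abs{\phi(t,x)}=0$, giving the second clause with $U=\Omega_\beta$.

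The main obstacle is the attractivity argument rather than the invariance bookkeeping: turning the convergence of $V$ along trajectories into convergence of the state requires the compactness of sublevel sets and a uniform negativity estimate for $\ll V$ on an annular region bounded away from the origin. Care is needed because $\ll V$ is only known to be negative pointwise, not uniformly, so the extraction of the strictly negative maximum $-\gamma$ depends essentially on continuity of $\ll V$ over the compact set $\{c\leq V\leq V(x_0)\}\cap\overline{\oball}_\eps$; once that compactness and the resulting uniform bound are secured, the remaining integration and limit steps are routine.
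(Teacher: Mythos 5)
Your proof is correct and is exactly the classical sublevel-set argument (forward invariance of $\Omega_\beta$, monotone decrease of $V$ along trajectories, and the compact-annulus contradiction for attractivity) found in the reference the paper cites for this result; the paper itself states the theorem without proof and defers to \cite{khalil_nonlinear_2015}. The only point worth tightening is the invariance step: to conclude the trajectory stays in $\Omega_\beta$ you should note explicitly that if it ever reached the sphere $\abs{x}=\eps$ there would be a first exit time at which $V\geq\alpha>\beta$, contradicting the monotonicity of $V$ along the solution.
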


An immediate consequence of Theorem \ref{thm: lyapunov} (and its proof \cite{khalil_nonlinear_2015}) is that the sub-level set $V^c = \{x\in D \mid V(x) \leq c\}$ is an ROA for the system~\eqref{eq: sys}, where $c > 0$ is such that $V^c$ does not intersect with the boundary of $D$. 

 Given a nonlinear system is asymptotically stable, a Lyapunov function exists by the converse Lyapunov theorem~\cite{khalil_nonlinear_2015}. But converse Lyapunov functions are usually nonconstructive. Despite their fundamental importance, computing Lyapunov functions remains a challenging task. Existing methods, such as sum-of-squares (SOS) and neural Lyapunov functions, typically yield local results with highly conservative estimates of the ROA, as their searching policies are guided by the Lyapunov derivative inequality, which admits non-unique solutions. In contrast, 
as demonstrated in~\cite{liu2023physics, zhou2024physics} and the recent literature, the maximal Lyapunov function~\cite{vannelli1985maximal}, which provides the largest estimation of ROA, can be characterized by the following partial differential equation (PDE):
    \begin{equation}\label{eq: lyapunov}
   \mathcal{L}V(x)   = - \eta(x),  
     \end{equation}
which we refer to the Lyapunov equation. Here $\eta$ is a positive definite function over $\X$ with respect to the origin. However, due to the unbounded nature of the Lyapunov function, it remains challenging to approximate the true domain of attraction using typical data-driven learning methods, given the compactness of the sampling subspace. The following theorem states that the domain of attraction can be characterized by a transformed maximal Lyapunov function with a bounded range from $0$ to $1$, where the $1$-sublevel set represents the domain of attraction.
\begin{thm}[Zubov's Theorem~\cite{zubov1964methods}]
\label{thm: zubov}
Let $D\subset\R^n$ be an open set containing the origin. Then $D=\D$ if and only if there exist two continuous functions $W:\, D \ra \R$ and $\eta:\, D \ra \R$ such that the following conditions hold: 1) $0<W(x)<1$ for all $x\in D\setminus \{\textbf{0}\}$ and $W(\textbf{0})=0$;  2) $ \eta$ is positive definite on $D$ with respect to $0$;   3) for any sufficiently small $c_3>0$, there exist $c_1, c_2>0$ such that $|x| \ge c_3$ implies 
     $W(x)>c_1$ and $\eta(x)>c_2$; 4) $W(x)\ra 1$ as $x\ra y$ for any $y\in \partial D$; 5) $W$ and $\eta$ satisfy
    \begin{equation}\label{eq: zubov}
    \mathcal{L}W(x) +\eta(x)(1-W(x)) = 0.
     \end{equation}
\end{thm}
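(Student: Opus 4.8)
The plan is to prove the two implications separately, both powered by a single integrating-factor identity extracted from condition 5. Writing $v(t) := 1 - W(\phi(t,x))$ and differentiating along the flow, condition 5 gives $\dot v(t) = -\ll W(\phi(t,x)) = \eta(\phi(t,x))\,(1-W(\phi(t,x))) = \eta(\phi(t,x))\,v(t)$, so that
\[
1 - W(\phi(t,x)) = \bigl(1 - W(x)\bigr)\exp\!\left(\int_0^t \eta(\phi(s,x))\,ds\right).
\]
This explicit formula converts every qualitative claim about the flow into a statement about the scalar quantity $\int_0^t \eta(\phi(s,x))\,ds$, and it is the workhorse for both directions.

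For the necessity direction ($D=\D \Rightarrow$ existence) I would construct the pair explicitly. Fix any bounded, continuous, positive definite $\eta$ on $\D$ for which $t\mapsto\eta(\phi(t,x))$ is integrable on $[0,\infty)$ for every $x\in\D$ (asymptotic stability makes this achievable and yields local uniform convergence, hence continuity of $W$), and set $W(x)=1-\exp(-\int_0^\infty \eta(\phi(t,x))\,dt)$. Conditions 1 and 2 are then immediate from $\eta>0$ and $W(0)=0$; condition 3 follows because reaching a fixed small neighborhood of the origin from $\{|x|\ge c_3\}$ costs a positive amount of $\int\eta$, while positive definiteness bounds $\eta$ below on the compact shell; and condition 5 is exactly the identity above read at $t=0$. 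The one genuinely analytic point is condition 4: I must show $\int_0^\infty\eta(\phi(t,x))\,dt\ra\infty$ as $x\ra\partial\D$, which I would argue from the fact that trajectories starting near $\partial\D$ take arbitrarily long to enter any fixed neighborhood of the origin, so the integral of the positive-definite $\eta$ diverges and $W\ra1$.

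For the sufficiency direction (existence $\Rightarrow D=\D$) I would first obtain $D\subseteq\D$. Since $\eta>0$ and $1-W>0$ on $D\setminus\{0\}$, condition 5 gives $\ll W<0$ there, and together with $W(0)=0<W$ elsewhere, Theorem \ref{thm: lyapunov} makes the origin asymptotically stable. The identity shows $W(\phi(t,x))\le W(x)<1$, so by condition 4 no trajectory from $D$ can approach $\partial D$; hence $D$ is forward invariant. If such a trajectory stayed outside some ball $\{|x|<c_3\}$, condition 3 would force $\int_0^t\eta\,ds\ge c_2 t$, making the right-hand side of the identity exceed $1$ while the left-hand side is at most $1$, a contradiction; therefore every trajectory enters the ball and, by stability, converges to the origin, giving $x\in\D$. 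The reverse inclusion $\D\subseteq D$ is where condition 4 is indispensable: extending $W$ to the boundary by $W\equiv1$ on $\partial D$, condition 5 forces $\ll W=0$ along $\partial D$, so the level set $\{W=1\}$ is invariant and a trajectory meeting $\partial D$ can never reach the origin, where $W=0$; thus $\partial D\cap\D=\emptyset$, and since $\D$ is open and connected with $\emptyset\ne D\subseteq\D$ open, connectedness yields $\D=D$.

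I expect the main obstacle to be exactly the boundary behaviour tying conditions 4 and 5 together, namely rigorously justifying $\ll W=0$ on $\partial D$ (equivalently, the divergence of $\int_0^t\eta(\phi(s,x))\,ds$ along any trajectory approaching $\partial D$) when $W$ is only assumed continuous up to the boundary. The clean integrating-factor identity reduces this to a pure integrability statement for $\eta$ along orbits reaching the boundary, but controlling it requires either boundedness of $\eta$ near $\partial D$ (which I would read off from continuity on the precompact state space $\X$) or a first-entry-time argument showing that an orbit through a boundary point lying in $\D$ would force $1-W$ to vanish at an interior point, contradicting $0<W<1$.
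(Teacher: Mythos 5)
The paper offers no proof of this statement: Theorem \ref{thm: zubov} is quoted from \cite{zubov1964methods} as a known classical result, so there is no in-paper argument to compare yours against, and I can only assess your sketch on its own terms. Your architecture --- the integrating-factor identity $1-W(\phi(t,x)) = (1-W(x))\exp(\int_0^t\eta(\phi(s,x))\,ds)$ as the single engine for both directions, the construction $W = 1-\exp(-\int_0^\infty\eta(\phi(t,\cdot))\,dt)$ for necessity, and the ``trapped outside a ball forces the right-hand side to blow up'' argument for sufficiency --- is the standard proof of Zubov's theorem and is sound in outline.

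Two steps, however, are genuine gaps rather than routine details. First, in the necessity direction you \emph{assume} a bounded positive definite $\eta$ for which $t\mapsto\eta(\phi(t,x))$ is integrable for every $x\in\D$ with locally uniform convergence of the tail; for an equilibrium that is asymptotically but not exponentially stable this is exactly where the work lies (for $\dot x=-x^3$ and $\eta(x)=|x|^2$ the integral diverges), and one must construct $\eta$ vanishing to sufficiently high order at the origin, or reparametrize time, via a Massera-type argument --- asserting achievability is not a proof. Second, your primary argument for $\D\subseteq D$ --- extending $W\equiv 1$ to $\partial D$ and claiming condition 5 forces $\ll W=0$ there, hence invariance of $\{W=1\}$ --- does not go through: $W$ is only continuous up to $\partial D$, is undefined outside $\overline D$, and $\ll W$ has no meaning at boundary points, while orbits through $\partial D$ need not stay in $\overline D$. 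The workable route is the one you relegate to your closing caveat: if $D\neq\D$, connectedness of $\D$ (a standard fact, but one you invoke without justification) yields $y\in\D\cap\partial D$; choose $T$ with $\phi(T,y)$ in a small ball about the origin where $W\le 1/2$, approximate $y$ by $x_k\in D$, and read the identity at $t=T$: the factor $1-W(x_k)\to 0$ while the exponential stays bounded, forcing $W(\phi(T,x_k))\to 1$ and contradicting continuity of $W$ at the interior point $\phi(T,y)$. Even this needs a uniform upper bound on $\eta$ along the approximating orbit segments, which continuity of $\eta$ on the open set $D$ alone does not supply near $\partial D$; you must either assume $\eta$ bounded (as in Zubov's original hypotheses) or restart the argument from the first time the orbit of $y$ enters a compact subset of $D$. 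Until these two points are filled in, the sketch is a correct plan but not a proof.
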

We refer to \eqref{eq: zubov} as the Zubov equation in the following context.

\subsection{Problem Formulation}
The idea of this paper is to identify the Koopman generator and then solve the stability-related PDEs outlined above, with the goal of enlarging the estimation of the ROA. In previous work, \cite{liu2023physics, zhou2024physics} demonstrated excellent performance in learning accuracy. However, when solving PDEs with learned generators for unknown systems,
a simple data-fitting strategy by minimizing the residual is typically insufficient, as additional physical information—such as boundary conditions—must also be incorporated. In the following section, we first recap the resolvent-type method for learning the Koopman generator in~\cite{meng2024resolvent}. In Section~\ref{sec:data-driven}, we focus on developing the physics-informed Koopman-generator-based construction of the Lyapunov function to address the technical concerns mentioned above.

\section{Learning the Generator and Identifying the Vector Field via Koopman Resolvent}
\label{sec: vector field}
In this section, we recap the approximation of the Koopman generator and show that the generator can be approximated by the Koopman resolvent~\cite{susuki2021koopman}.

For $h \in \mathcal{C}^1(\X)$, we define the following integral representation of the Koopman resolvent: $\rr(\lambda; \ll)h:=\int_0^\infty e^{-\lambda t}(\K_t h) dt. $
Consequently, the Yosida approximation of the Koopman generator $\ll$ can be defined as $\lll =\lambda^2\rr(\lambda; \ll) -\lambda\id$, where $\id$ is the identity operator~\cite{pazy2012semigroups}. We have that $\lll$ converges to $\ll$ strongly as $\lambda \ra \infty$ .
In order to have a finite time approximation, we define a truncation integral operator
as 
\begin{equation} \label{eq:resolvent_truncated}
    \rr_{\lambda,\taus}  h(x):=\int_0^{\tau_s}e^{-\lambda s}\K_{s} h(x) ds.
\end{equation}
 Then, when $\lambda$ is sufficiently large, we should have an accurate approximation of the Koopman generator, as stated in the following theorem.

\begin{thm}[\cite{meng2024resolvent}]
\label{thm:error_bound}
    Define $\ll_{\lambda,\taus}:=  \lambda^2 \rr_{\lambda,\taus}  -\lambda\id$, and let $\taus\geq 0$ and $\lambda>\omega$ be fixed. Then, $ \| \ll_{\lambda,\taus}  -\lll \|\leq  \frac{\Sigma\lambda^2}{\lambda-\omega}e^{-\lambda \taus}$ on $\mathcal{C}^1(\X)$.
\end{thm}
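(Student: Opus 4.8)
The plan is to reduce the difference of operators to the tail of a single integral and then control that tail with the $\mathcal{C}_0$-semigroup growth estimate. First I would observe that $\ll_{\lambda,\taus}$ and $\lll$ differ only through their resolvent parts, since both contain the same summand $-\lambda\id$. Hence, for every $h\in\mathcal{C}^1(\X)$,
\[
\ll_{\lambda,\taus}h-\lll h=\lambda^2\bigl(\rr_{\lambda,\taus}h-\rr(\lambda;\ll)h\bigr).
\]
Comparing the integral representation $\rr(\lambda;\ll)h=\int_0^\infty e^{-\lambda s}\K_s h\,ds$ with the truncated operator \eqref{eq:resolvent_truncated}, the bracketed quantity is exactly the discarded tail,
\[
\rr_{\lambda,\taus}h-\rr(\lambda;\ll)h=-\int_{\taus}^\infty e^{-\lambda s}\K_s h\,ds,
\]
so the entire problem collapses to estimating this one tail integral.

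Next I would bound the tail in the operator norm on $\mathcal{C}^1(\X)$. Pulling the norm inside the (Bochner) integral and substituting the growth bound $\norm{\K_s}\leq \Sigma e^{\omega s}$ recalled just after Definition~\ref{eq_koopman}, I get
\[
\norm{\ll_{\lambda,\taus}h-\lll h}\leq \lambda^2\int_{\taus}^\infty e^{-\lambda s}\norm{\K_s h}\,ds\leq \Sigma\lambda^2\norm{h}\int_{\taus}^\infty e^{-(\lambda-\omega)s}\,ds.
\]
The hypothesis $\lambda>\omega$ enters precisely here: it guarantees that the remaining scalar tail integral converges, with $\int_{\taus}^\infty e^{-(\lambda-\omega)s}\,ds=\frac{1}{\lambda-\omega}e^{-(\lambda-\omega)\taus}$. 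Dividing by $\norm{h}$ and taking the supremum over the unit ball of $\mathcal{C}^1(\X)$ then yields $\norm{\ll_{\lambda,\taus}-\lll}\leq \frac{\Sigma\lambda^2}{\lambda-\omega}e^{-(\lambda-\omega)\taus}$, which reproduces the asserted exponential decay.

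I do not expect a genuine obstacle, since this is the standard tail estimate for the Laplace-transform representation of a resolvent; the work is entirely in the bookkeeping. The only points needing mild care are the justification of interchanging norm and integral — valid because $s\mapsto e^{-\lambda s}\K_s h$ is continuous and integrable in $\mathcal{C}^1(\X)$ by the $\mathcal{C}_0$-property together with the growth bound — and tracking where $\lambda>\omega$ is used, namely in securing integrability over $[\taus,\infty)$. A secondary point is the exact decay rate: the computation naturally produces $e^{-(\lambda-\omega)\taus}$, matching the displayed estimate with effective rate $\lambda-\omega$, which equals the stated exponent exactly when $\omega=0$.
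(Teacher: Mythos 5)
The paper does not actually prove this theorem; it is quoted from \cite{meng2024resolvent} and the reader is referred there for the derivation, so there is no in-paper proof to compare against. Your route --- cancel the common $-\lambda\id$ terms, identify the difference of resolvents with the discarded tail $\int_{\taus}^{\infty}e^{-\lambda s}\K_s h\,ds$, and control that tail with the semigroup growth bound $\norm{\K_s}\le \Sigma e^{\omega s}$ --- is the standard and surely the intended argument, and every step of it is sound, including your remarks on where $\lambda>\omega$ is needed and on interchanging norm and integral.

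The one substantive issue is the exponent, which you notice but understate. Your computation yields
\begin{equation*}
\norm{\ll_{\lambda,\taus}-\lll}\;\le\;\frac{\Sigma\lambda^2}{\lambda-\omega}\,e^{-(\lambda-\omega)\taus},
\end{equation*}
and since $\omega\ge 0$ this is \emph{weaker} than the displayed bound $\frac{\Sigma\lambda^2}{\lambda-\omega}e^{-\lambda\taus}$ whenever $\omega>0$ and $\taus>0$. So, as written, your argument does not establish the theorem verbatim except in the case $\omega=0$; there is no way to upgrade the tail estimate to decay rate $\lambda$ using only $\norm{\K_s}\le \Sigma e^{\omega s}$, because the integrand itself only decays like $e^{-(\lambda-\omega)s}$. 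You should say this plainly: either the stated exponent is a typo for $e^{-(\lambda-\omega)\taus}$ (the qualitative conclusion --- exponential decay in $\taus$ and convergence as $\taus\to\infty$ for fixed $\lambda>\omega$ --- is unaffected, and this is how such truncation bounds are usually recorded), or the cited reference must be using additional structure (e.g.\ a contraction semigroup, $\omega=0$) that is not available from the hypotheses given here. Flagging the discrepancy as a definite gap in the statement-as-written, rather than as a ``secondary point,'' would make the proposal complete.
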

We refer the readers to~\cite{meng2024resolvent} for detailed derivation and proofs. 
However, when $\lambda$ is large, it poses a challenge as it requires high sampling frequencies in practical applications. We then leverage the first resolvent identity $[(\lambda-\mu)\rr(\mu; \ll)+\id]\rr(\lambda; \ll) = \rr(\mu; \ll)$ to tackle this issue. By computing a finite-time horizon approximation $\rr_{\mu,\taus}$ for $\rr(\mu; \ll)$ first, $\ll_{\lambda,\taus}$ can be computed using this identity, where $\mu$ is a small positive number in the resolvent set of the Koopman generator. 

By \eqref{eq:equality} with $x := [x_1, x_2, \dots, x_n ]^\intercal \in \R^n$, we have $\ll x_i = f_i(x)$, for $i = 1, 2, \dots n$, where $f_i$ is defined as $\dot{x}_i = f_i(x)$, a component of $f$. As a result of Theorem~\ref{thm:error_bound}, when $x_1, x_2, \dots, x_n $ are included in the observable test functions, the vector field can be readily approximated:
\begin{equation}
\label{eq: f_hat}
    \hat{f}_i(x) = \ll_{\lambda,\taus} x_i \approx \ll x_i = f_i(x), \quad \forall i=1,2\cdots,n.
\end{equation}
We also direct the readers to ~\cite{meng2024resolvent} to see a set of numerical results on the effectiveness of \eqref{eq: f_hat} on the system identification tasks, varying from polynomial systems to general nonlinear systems.

\section{Stability Certificates for the Unknown Systems} \label{sec:stability}
In this section, we show that the Koopman generator can be used not only to approximate the vector field but also to establish stability certificates for unknown systems, i.e., constructing Lyapunov functions by solving linear PDEs. Furthermore, we show that the derived Lyapunov functions provide stability guarantees for the unknown systems.

We assume that the origin is a stable equilibrium point of~\eqref{eq: sys}. Then \eqref{eq: sys} can be rewritten as $\dot x = Ax + g(x),$
where $g(x)=f(x)-Ax$ satisfies $\lim_{x\ra \textbf{0}}\frac{\norm{g(x)}}{\norm{x}}=0$. If $A$ is known,  a local quadratic Lyapunov function $V_P(x)=x^TPx$ can be computed by solving $PA +A^T P = -Q$, where $Q$ is any symmetric positive definite matrix. It can be easily shown that there exists a sufficiently small $c>0$ such that $\O = \{x\in \X| V_P(x)\le c\}$, where the linearization dominates, that is a ROA~\cite[Section 5.1]{liu2023physics}. We make the following assumption.

\begin{ass} \label{ass: linearization}
 Let the unknown system be identified using~\eqref{eq: f_hat} and denote the linearization of the approximated vector field about the origin as $\dot{x} = \hat{A}x$. With $\hat{A}$ being Hurwitz (which can be checked numerically), let $P$ solve the Lyapunov euqation $P\hat{A} + \hat{A}^T P = -Q$ for some positive definite $Q$. We assume that identification is sufficiently accurate such that $\O = \{x\in \X| V_P(x)\le c\}$ is a ROA for (\ref{eq: sys}) for some $c>0$. 
\end{ass}

Note that this assumption can typically be satisfied in most cases by collecting a sufficient amount of data around the origin and ensuring $f(\textbf{0}) = \hf(\textbf{0})$ in practice.
Now, consider an unsampled state $z$, and $y$ represents the nearest known sample for which $f(y)$ and $\hf(y)$ are accessible. The following theorem establishes the stability conditions for the unknown systems.

\begin{prop} \label{prop:unknown_stability}
    Suppose that Assumption~\ref{ass: linearization} holds. Let $V:\,\X\ra \R$ be a candidate Lyapunov function for $\dot x = \hf(x)$, with $\hf(\textbf{0}) = 0$, and let $K_f$ and $K_{\hf}$ be the Lipschitz constants of $f$ and $\hf$ respectively, on $\X$. Let $\Y\subset \X$ be a finite set of samples. Let $\delta$, $\nu$, and $\alpha$ be positive constants such that 1) for every $z\in \X$, there exists $y\in \Y$ such that $\|z - y\| < \delta$, 2) $\sup_{x\in \X}\| \nabla V(x)\| \leq \nu$, and 3) $\max_{y\in \Y}\|f(y)-\hf(y)\|=\alpha$. Let $\beta$ be such that $((K_f+ K_{\hf}) \delta + \alpha) \nu < \beta $ and
    \begin{equation}
    \label{eq: relaxed V}
        \nabla V(x) \cdot \hf(x) \leq -\beta
    \end{equation}
    holds on $\Omega_{c_1,c_2} = \{x\in \X:\, c_1\le V(x)\le c_2\}$, for some $0<c_1<c_2$, such that 
    $\Omega_{c_1}=\{x\in \X:\, V(x)\le c_1\}\subset \O$, then $\Omega_{c_2}=\{x\in \X:\, V(x)\le c_2\}$ is an ROA for the unknown system (\ref{eq: sys}), provided that $\Omega_{c_2}$ does not  intersect with the boundary of $\X$.
\end{prop}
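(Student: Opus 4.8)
The plan is to transfer the strict decrease of $V$ along the approximate dynamics $\hf$ into a (weaker but still strict) decrease along the true dynamics $f$ on the annulus $\Omega_{c_1,c_2}$, and then to couple this with the inner certified region $\O$ via a trapping-region argument. First I would establish a uniform bound on the identification error over all of $\X$. Given an arbitrary $x\in\X$, hypothesis (1) provides a sample $y\in Y$ with $\norm{x-y}<\delta$, and the triangle inequality together with the Lipschitz constants of $f$ and $\hf$ and hypothesis (3) yields
\begin{align*}
\norm{f(x)-\hf(x)} &\le \norm{f(x)-f(y)} + \norm{f(y)-\hf(y)} + \norm{\hf(y)-\hf(x)} \\
&\le (K_f+K_{\hf})\delta + \alpha =: E,
\end{align*}
a bound valid uniformly on $\X$.

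Next I would estimate the Lyapunov derivative along the true system on the annulus. Writing $\ll V(x) = \nabla V(x)\cdot\hf(x) + \nabla V(x)\cdot(f(x)-\hf(x))$, I would bound the second term by Cauchy--Schwarz and hypothesis (2) as $\abs{\nabla V(x)\cdot(f(x)-\hf(x))}\le \nu E$. Combined with \eqref{eq: relaxed V}, this gives $\ll V(x)\le -\beta+\nu E$ on $\Omega_{c_1,c_2}$, and since $\beta > ((K_f+K_{\hf})\delta+\alpha)\nu = \nu E$ by hypothesis, I obtain $\ll V(x)\le -\gamma<0$ on $\Omega_{c_1,c_2}$ with $\gamma := \beta-\nu E>0$. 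With this strict decrease, I would first argue forward invariance of $\Omega_{c_2}$: it is a closed sublevel set inside the precompact $\X$ that does not meet $\partial\X$, hence compact, and because $\ll V<0$ holds on the outer boundary $\{V(x)=c_2\}\subset\Omega_{c_1,c_2}$, a first-exit-time contradiction applied to $t\mapsto V(\phi(t,x_0))$ shows that no trajectory starting in $\Omega_{c_2}$ can cross this level set outward, so it remains in $\Omega_{c_2}$, hence in $\X$, for all $t\ge 0$.

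Finally I would establish attraction. For $x_0\in\Omega_{c_2}\setminus\Omega_{c_1}$, as long as the trajectory stays in the annulus we have $\tfrac{d}{dt}V(\phi(t,x_0))\le -\gamma$, so $V$ decreases at a uniform rate; since forward invariance forbids escape through the outer boundary, the trajectory must reach the inner level set $\{V(x)=c_1\}$ within time at most $(c_2-c_1)/\gamma$, thereby entering $\Omega_{c_1}\subset\O$. Because $\O$ is an ROA for \eqref{eq: sys} by Assumption~\ref{ass: linearization}, the trajectory then converges to the origin, and together with forward invariance this shows $\Omega_{c_2}\subseteq\D$ is forward invariant, i.e., an ROA. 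I expect this last step to be the main obstacle: since \eqref{eq: relaxed V} cannot hold near the origin (where $\nabla V$ vanishes), $V$ need not be monotone on all of $\Omega_{c_2}$, so the usual single-sublevel-set Lyapunov conclusion does not apply directly. The crux is the two-level-set trapping argument that drives every trajectory into the inner region $\Omega_{c_1}\subset\O$ in finite time and then hands control to the linearization-based ROA, while carefully ruling out escape across the outer boundary and off $\X$.
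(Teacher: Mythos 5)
Your proposal is correct and follows essentially the same route as the paper: the same triangle-inequality bound $\norm{f(x)-\hf(x)}\le (K_f+K_{\hf})\delta+\alpha$, the same Cauchy--Schwarz transfer giving $\nabla V\cdot f<0$ on $\Omega_{c_1,c_2}$, and the same hand-off of trajectories from $\Omega_{c_2}$ into $\Omega_{c_1}\subset\O$ and thence to the origin via Assumption~\ref{ass: linearization}. The only difference is that you spell out the forward-invariance and finite-time-entry (trapping) argument that the paper compresses into the single sentence ``hence all the trajectories in $\Omega_{c_2}$ converge to $\Omega_{c_1}$,'' which is a worthwhile elaboration but not a different proof.
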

\begin{proof}
    For any $x \in \X$ and $y\in \Y$, we have 
    \begin{align}
        \begin{split}
            & \|f(x) - \hf(x)\| \\
            &  \leq \|f(x) - f(y) \|  + \|f(y) - \hf(y) \| + \|\hf(y) - \hf(x)\|  \\
            & \leq K_f \delta + \alpha + K_{\hf} \delta.
        \end{split}
    \end{align}
It follows that
    \begin{equation} 
    \label{eq: beta}
	\begin{split}
 &\nabla V(x) \cdot f(x) - \nabla V(x) \cdot \hf(x) \\
   & \leq \| \nabla V  \| \| f(x) - \hf(x)\| \\
   &  \leq \nu ((K_f+ K_{\hf}) \delta + \alpha)  < \beta. 
	\end{split}
\end{equation}
By (\ref{eq: relaxed V}) and (\ref{eq: beta}), we have 
    $
    \nabla V(x) \cdot f(x) < \nabla V(x) \cdot \hf(x) + \beta <0
    $ 
on $\Omega_{c_1,c_2}$. 
Hence all the trajectories in $\Omega_{c_2}$ converge to $\Omega_{c_1}\subset\O$. By Assumption~\ref{ass: linearization}, solutions of (\ref{eq: sys}) will converge to the origin from $\O$.
\end{proof}

\section{Data-Driven Algorithm for Deriving Stability Certificates}
\label{sec:data-driven}
In this section, we continue to discuss the data-driven learning algorithm based on the derivation for learning the vector field and Lyapunov function in the previous two sections.

\subsection{Learning the Koopman Generator}

First, we select a finite dictionary of continuously differentiable observable test functions, denoted by $    \Zk_N(x):=[\zk_{\scriptscriptstyle  1}(x), \zk_{\scriptscriptstyle  2}(x),  \cdots, \zk_{\scriptscriptstyle  N}(x)], \;N\in\N.$
Suppose that there exists a data-driven finite-dimensional approximation of $\ll_{\lambda,\taus}$, denoted as $L$, such that for any $h\in\operatorname{span}\{\zk_1, \zk_2, \cdots,\zk_N\}$, 
i.e., $h(x)=\Zk_N(x) \zeta$ where $\zeta$ is a column vector, we have that $\ll h(\cdot)\approx \ll_{\lambda,\taus} h(\cdot) \approx \Zk_N(\cdot)(\Lb\zeta) $.

By randomly sampling $M$ initial conditions $\{x^{(m)}\}_{m=1}^{M}\subseteq\X$ and fixing a $\taus$ with a sampling rate of $\gamma$ Hz along the trajectories, we construct the observable functions into $\Xb$, as $B =[\Zk_N(x^{(1)}), \Zk_N(x^{(2)}),\cdots, \Zk_N(x^{(M)})]^\trans.$ Under the specified sampling rate, to avoid inaccurate computation of the truncated integral of the Koopman resolvent~\eqref{eq:resolvent_truncated} for large $\lambda$, we first apply the Gauss–Legendre quadrature method~\cite{press2007numerical} using trajectory samples to obtain the numerical  approximation $\hat{\rr}_{\mu,\taus}$ for $\rr_{\mu,\taus}$ with a relatively small constant $\mu > 0$.
Subsequently, the generator $L$  can be inferred as 
\begin{equation} \label{eq:generator}
    L = X^\dagger Y
\end{equation}
where $X = (\lambda-\mu) \hat{\rr}_{\mu,\taus} + B$, and $Y = \lambda \mu \hat{\rr}_{\mu,\taus} -\lambda B$.
We direct the readers to \cite{meng2024resolvent} for a detailed algorithm and explanation of this resolvent-type method. 

By including $x$ in the dictionary, the approximated vector field $\hat{f}(x) \approx B\Zk_N(x)^T$, for some $B \in \R^{n \times N}$. To ensure that the approximated dynamics preserve the same equilibrium point at the origin, we can perform $\tilde{f}(x) = \hat{f}(x) - \hat{f}(\textbf{0})$, as demonstrated in~\cite{quartz2024stochastic}. In doing so, Assumption~\ref{ass: linearization} is easier to be satisfied with $\tilde{f}(\textbf{0}) =0 $. 
Again, we would like to emphasize that with this proposed method, the vector field is learned via learning the Koopman generator using the discrete sampled points along the trajectories. Namely, one does not need to approximate the time derivatives and then identify the continuous-time systems. Moreover, it contributes to a more accurate identification, in comparison with finite-difference based methods, for instance, SINDy\cite{meng2024resolvent,klus2020data}.

\subsection{Learning the Lyapunov Function}

We consider a Lyapunov function candidate of the form $V(x) = \Zk_N(x) \theta$ with $\theta \in \R^N$. Then, to solve the Lyapunov equation~\eqref{eq: lyapunov} and the Zubov equation~\eqref{eq: zubov} respectively, we need the Lyapunov function candidate to satisfy: 
    \begin{equation}\label{eq: Lyapunov_dd}
     \Zk_N(x)\Lb\theta  = -\eta(x),
     \end{equation}
and
   \begin{equation}\label{eq: zubov_dd}
     [\Zk_N(x)\Lb - \eta(x) \Zk_N(x)] \theta  = -\eta(x),
 \end{equation}
respectively. Then, it appears that the problem is reduced to finding the vector $\theta$ such that the least-squares error between the l.h.s. and r.h.s. of \eqref{eq: Lyapunov_dd} or \eqref{eq: zubov_dd} is minimized at the sampled initial conditions. However, directly solving the linear PDEs in such a way may not return the true solutions, especially for the Zubov equation~\eqref{eq: zubov_dd}. For instance, it is evident that $W(x) = 1$ is a trivial solution to~\eqref{eq: zubov}. 

Next, we select a set of boundary points $\{y^{(p)}\}_{p=0}^{P-1}$
and formulate the least-squares problem for solving the Zubov equation as: 
\begin{align} \label{eq: loss}
\theta
  = \mathop{\arg \min}_{K \in \mathbb{R}^N} & \frac{1}{M}\sum_{i=1}^{M} \big\| \left[\Zk_N(x_i)\Lb - \eta(x_i) \Zk_N(x_i)\right] K + \notag \\
  & \eta(x_i) \big\|^2 + \lambda_b \frac{1}{P}\sum_{i=1}^{P} \big\|\Zk_N(y_i) K - b(y_i)\big\|^2,
\end{align}
where $\lambda_b > 0$ is a weight parameter, and $b(y_i)$ is boundary values for the sampled $y_i$. Notably, solving the Lyapunov function~\eqref{eq: Lyapunov_dd} represents a simplified version of this process, and thus the formulation is skipped. For the positive definite function $\eta$, one common choice is $\eta(x) = r|x|^2$, where $r$ is a positive constant. Unless otherwise specified, we will use it with $r = 0.1$ in all the following numerical experiments.
The boundary condition for the Zubov equation consists of both $W(\textbf{0}) = 0$ and $W(y) = 1$ for $y \notin \D$ (and $y \in \partial \D$ if the knowledge is available), whereas the one for the Lyapunov equation is a single point $V(\textbf{0}) = 0$. Note that the positive definiteness of the Lyapunov function $V$ can be ensured by solving the PDEs accurately with the chosen $\eta$~\cite{liu2023physics}.

\begin{rem} \label{rem:identified}
Note that with the identified vector field $\tf(x)$, the stability certificates can also be attained by directly solving the Lyapunov equation $\nabla V(x;\theta)\cdot \tilde{f}(x) = -\eta(x)$ and the Zubov equation $\nabla W(x;\theta)\cdot \tilde{f}(x) = -\eta(x) (1-  W(x;\theta))$, using the method proposed in~\cite{liu2023physics}. However, the proposed method provides a more efficient approach, requiring only the solution of a least-squares problem by repeatedly using the same (observable) test functions.
\end{rem}

The process of identifying the dynamics and computing the Lyapunov functions using the Zubov equation is summarized in Algorithm \ref{alg:Lyapunov}.

\begin{algorithm}[ht]
\SetAlgoLined
\KwIn{X, Y}

\KwOut{$V$ or $\hf \& V$}

Learn $L$ using \eqref{eq:generator} \;

\If{Identifying the dynamics is needed}{
    Identify the dynamics $\hf$ with $L$ using \eqref{eq: f_hat} \;
    Learn the Lyapunov function $V$ with \eqref{eq: loss} \;
}
\Else{ 
    Learn the Lyapunov function $V$ with \eqref{eq: loss} \;
}

\caption{Koopman generator-based Lyapunov function and system identification}
\label{alg:Lyapunov}
\end{algorithm}

\subsection{The Choice of the Dictionary}
In this work, we primarily employ two types of observable test functions widely recognized in the literature~\cite{meng2024resolvent, mauroy2019koopman}: monomials and shallow neural networks, featuring randomly initialized weights and biases.

It is well-known that one-hidden layer neural networks are able to approximate any continuous function on a compact set to any desired degree of accuracy \cite{cybenko1989approximation}. More importantly, as illustrated in~\cite{huang2006extreme, zhou2024physics}, the extreme learning machine (ELM) algorithm can solve the linear stability-related PDEs well using single-hidden layer feedforward neural networks. It returns the Lyapunov functions of form $V(x;\theta) = \sigma(\omega x+b)\theta$,  where $\omega$ and $b$ are randomly generated weights and bias, while $\theta$ is learned by performing a least-squares task. In this paper, we mainly consider the hyperbolic tangent function $\tanh$ as the activation function $\sigma$. 
That said, in order to simultaneously learn the vector field and the Lyapunov function via the Koopman generator, we set $\Zk_N = [\tanh(\omega x + b)^T ; x]$, where $\omega \in \R^{(N-n) \times n}, b \in \R^{N-n}.$ Consequently, we have an approximation of the vector field: $\hf(x) = \Xi [\tanh(\omega x + b)^T ; x]^T$ with $\Xi = L[ (N-n):end, :]$, and $V(x) = [\tanh(\omega x + b)^T ; x] \theta$.

The above selection of observable test functions applies to general nonlinear systems. However, with specific information about the systems, tailored test functions can improve approximation accuracy. For example, in polynomial systems, monomials can be used for more accurate results. Note that with the information, we potentially can get a better approximation of the Koopman generator, and thus better ROA estimates can be attained. In general, the more accurate the resulting generator and corresponding identified vector field are, the closer the ROA estimate can be to the actual ROA.

\section{Numerical Experiments}

We present two numerical examples to demonstrate the effectiveness of the proposed method. For both examples, we assume that the systems are unknown and set $Q = I$ when solving for a quadratic Lyapunov function $V_P$, as shown in Section~\ref{sec:stability}. Recalling equations \eqref{eq: Lyapunov_dd} and \eqref{eq: zubov_dd}, the Zubov equation is in general more challenging to solve and yields larger ROA estimates. Hence we focus exclusively on its results in this section. The Lyapunov equation is a simpler case in terms of solving the linear PDE and can be readily computed by simplifying the process, mainly the residual term and the boundary conditions, with the provided code.
The code is available at \url{https://github.com/RuikunZhou/Unknown_Zubov_Koopman}.

\subsection{Reversed Van der Pol Oscillator}

Consider the reversed Van der Pol oscillator 
$\dot \xb_1 =-\xb_2, \quad 
\dot \xb_2 = \xb_1 - (1 - \xb_1^2)\xb_2, $
with $x:=[x_1, x_2]$. Since this is a polynomial system, we utilize the monomials as the $N=J\times K$ basis functions with $\zk_{i}(x) = x_1^{p}x_2^q$, $p=\mod{(i/J)}$, and $q = \lfloor i/K \rfloor$. In order to solve the Zubov equation \eqref{eq: zubov_dd} sufficiently well using polynomials, we choose $J = K = 8$. In this case, for generator learning, we only need to randomly sample $M = 100$ initial conditions on $\X = [-1.2, 1.2]^2$ and the data points along the trajectories with a sampling frequency $\gamma = 50 Hz$ for $\taus = 5s$. In comparison, the feedforward neural network-based method in~\cite{zhou2022neural} needs 9 million samples with the exact values of the corresponding time derivative for each data point.
As a natural result of the generator approximation $L$, we have the identified vector field $\tf$. Also, we verified that Assumption~\ref{ass: linearization} indeed holds in this case.

Next, we randomly sample 3000 data points across $ [-2.5, 2.5]\times [-3.5, 3.5]$ with 100 boundary points. Note that the boundary conditions given here do not need to be exactly on $\partial \D$, and the samples for generator learning can be reused here. We then solve the Zubov equation~\eqref{eq: zubov_dd} using the learned $L$ with the help of the well-developed toolbox for finding and verifying Lyapunov functions, LyZNet~\cite{liu2024tool}, resulting in a polynomial Lyapunov function. It is remarkable that differing from the regular Lyapunov conditions defined in Theorem~\ref{thm: lyapunov}, we verify the modified Lyapunov conditions for the Lie derivative as~\eqref{eq: relaxed V}, using the integrated satisfiability modulo theories (SMT) solvers in the toolbox. The resulting Lyapunov function and the corresponding ROA are included in Fig.~\ref{fig:vdp_Lyapunov}. 

The parameters for learning the generators and verification for stability are detailed in Table~\ref{tbl:vdp}. We assume that the Lipschitz constant $K_f$ is known. For $\hf$ and $V$, we have the exact expressions, given that they are linear combinations of the observable test functions. We thus can leverage symbolic computations and then compute $K_{\hf}$ and $\nu$ with the help of the SMT solver, dReal~\cite{gao2013dreal} by performing interval analysis with the expressions. Note that when the 2-norm needs to be evaluated, it is over-approximated by the Frobenius norm for the expressions. 
It is noteworthy that while data is sampled on $\X$ to learn the dynamics within the domain of attraction, the certified ROA extends beyond this so-called valid region described in~\cite{zhou2022neural}. This extension is made possible by incorporating physical insights—specifically, leveraging the knowledge that the system is polynomial. With that, we have a more accurate approximation with improved generalization performance outside $\X$. More importantly, both the system identification and stability certification processes require significantly less computational time compared to existing methods. The accuracy and efficiency also result in a substantially smaller $\beta$ in~\eqref{eq: relaxed V}, reduced by a factor of $\frac{1}{10}$ for verification. This reduction is critically important for practical applications involving real-world systems.

\begin{table*}[!ht]
  \caption{Parameters in the Van der Pol Oscillator case, where the first four pertain to learning the Koopman generator, while the last six correspond to those in Proposition~\ref{prop:unknown_stability}} 
  \label{tbl:vdp}
  \centering
  \break
  \begin{tabular}{cccccccccccccc}
    \toprule
    $\taus (s)$ & $\gamma$ & $\mu$ & $\lambda$ & $\lambda_b$ & $K_f$   & $K_{\hf}$ & $\delta$ & $\alpha$ & $\nu$ & $\beta$  \\
      \midrule
    5 & 50 & 2.5 & 1e8 & 100 & 4.90 & 4.90 & 3e-4 & 4.16e-6 & 7.07e-1 & 2.08e-3 \\
    \bottomrule
  \end{tabular}
\end{table*}
\begin{figure} [ht!]
    \centering
    \includegraphics[width=\linewidth]{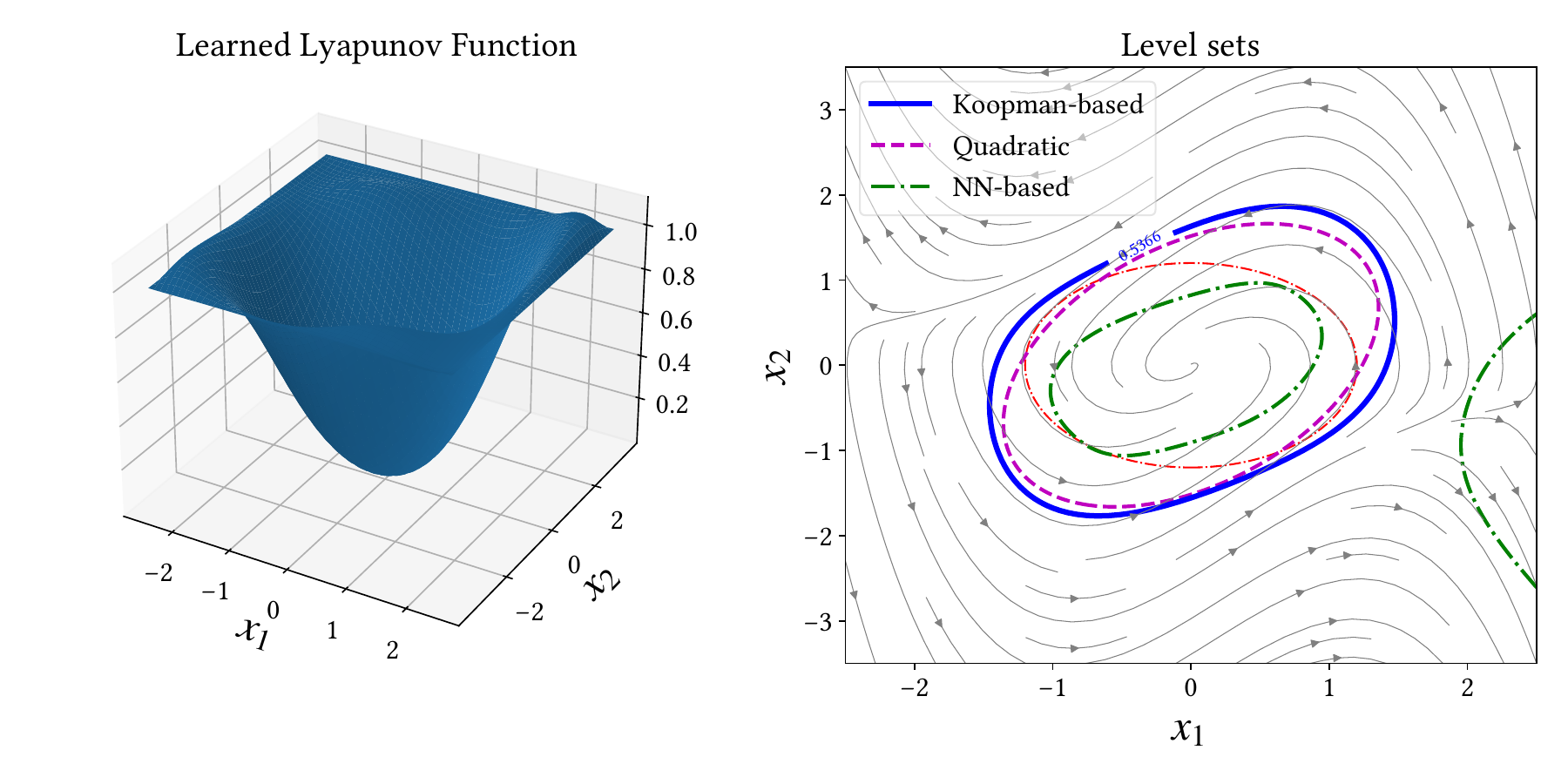}
    \caption{The learned Lyapunov function and corresponding certified ROA estimates, where the blue curve is the one using the proposed method. The magenta dashed one is the verified largest ROA estimate with $V_P(x)$, while the green dot-dashed line is with the neural network-based approach proposed in~\cite{zhou2022neural}. The red dot-dashed circle denotes $\X$ on which we collect the data for computing the Koopman generator.}
    \label{fig:vdp_Lyapunov}
\end{figure}

\subsection{Two-machine Power System}
Consider the two-machine power system in \cite{vannelli1985maximal} with the following governing equation: $\dot{\xb}_1 = \xb_2, \;\dot{\xb}_2 =  -0.5\xb_2 - (\sin(\xb_1 +a)-\sin(a)), $
where $a = \frac{\pi}{3}$. In this example, the system is non-polynomial with a trigonometric function, and we assume no prior knowledge of its dynamics. The dictionary is constructed using 100 $\tanh$-activated neural networks and the state variable $x$. We sample $M = 50^2$ initial conditions within $\X = [-1, 1]^2$, which lies entirely within the true domain of attraction. Following the same procedure as in the previous example, we solve the Zubov equation over $ [-2, 3]\times [-3, 1.5]$. The parameter settings are summarized in Table~\ref{tbl:two machine}, and the resulting Lyapunov function and corresponding ROA estimate can be found in Fig.~\ref{fig:two_machine_Lyapunov}. 

\begin{table*}[!ht]
  \caption{Parameters for the two-machine power system using $\tanh$-activated neural networks} 
  \label{tbl:two machine}
  \centering
  \break
  \begin{tabular}{cccccccccccccc}
    \toprule
    $\taus (s)$ & $\gamma$ & $\mu$ & $\lambda$ & $\lambda_b$ & $K_f$   & $K_{\hf}$ & $\delta$ & $\alpha$ & $\nu$ & $\beta$  \\
      \midrule
    5 & 10 & 3 & 1e8 & 100 & 1.52 & 1.52 & 1e-4 & 2.72e-4 &  1.45 & 8.34e-4 \\
    \bottomrule
  \end{tabular}
\end{table*}

\begin{figure} [ht!]
    \centering
    \includegraphics[width=\linewidth]{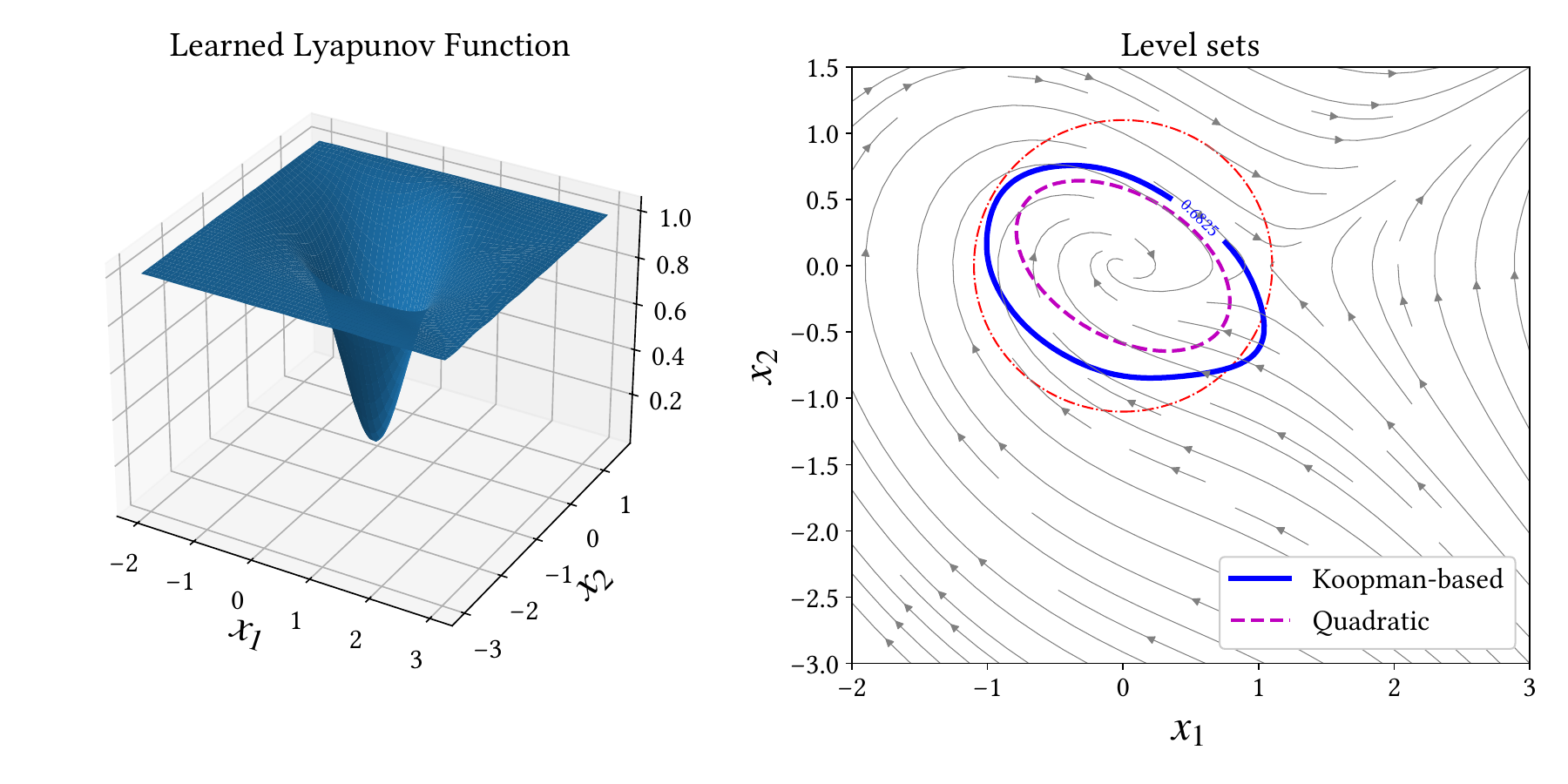}
    \caption{The learned Lyapunov function and corresponding certified ROA estimates, where the red dot-dashed circle denotes $\X$ on which we collect the data for computing the Koopman generator. The blue curve is the certified ROA estimate using the proposed method, while the magenta dashed line is the one with $V_P(x)$.}
    \label{fig:two_machine_Lyapunov}
\end{figure}

We emphasize that the Koopman generator-based method proposed in this work serves as a modular framework for identifying continuous-time systems and finding Lyapunov functions. After identifying the system, one can also directly solve the PDEs to find a Lyapunov function, as discussed in Remark~\ref{rem:identified}. On the other hand, if the vector field is identified using other techniques, such as SINDy, the learned generator $L$ is exclusively used for solving the stability-related PDEs. Notably, leveraging the generator for both system identification and stability analysis significantly reduces computational overhead, underscoring the efficiency of the proposed approach.

\section{CONCLUSION}

In this paper, we propose a framework for simultaneously identifying the vector field and finding a Lyapunov function for an unknown nonlinear system by computing the Koopman generator. Using a shared dictionary of observable test functions for both computing the Koopman generators and solving stability-related PDEs, we establish a more efficient way for constructing stability certificates from data. Leveraging SMT solvers further allows for less conservative ROA estimates, compared to the quadratic Lyapunov functions and an existing neural network-based method. While the current numerical results focus on low-dimensional systems, future work will explore scalability to high-dimensional dynamics with advanced verification tools.

\bibliographystyle{plain}
\bibliography{l4dc2025}

\end{document}